\title{Understanding Norm Change: An Evolutionary \\ Game-Theoretic Approach (Extended Version)}
\author[S. De et al.]
       {Soham De$^1$, Dana S. Nau$^{1, 2}$, and Michele J. Gelfand$^3$
       \\
       $^1$Department of Computer Science \\ $^2$Institute for Systems Research\\ $^3$Department of Psychology\\
       University of Maryland, College Park, MD, USA\\
       \texttt{\{sohamde, nau\}@cs.umd.edu, mgelfand@umd.edu}
       }
\begin{document}
\maketitle

\begin{abstract}
Human societies around the world interact with each other by developing and maintaining social norms, and it is critically important to understand how such norms emerge and change. In this work, we define an evolutionary game-theoretic model to study how norms change in a society, based on the idea that different strength of norms in societies translate to different game-theoretic interaction structures and incentives. We use this model to study, both analytically and with extensive agent-based simulations, the evolutionary relationships of the \emph{need for coordination} in a society (which is related to its norm strength) with two key aspects of norm change:
 \emph{cultural inertia} (whether or how quickly the population responds when faced with conditions that make a norm change desirable), and \emph{exploration rate} (the willingness of agents to try out new strategies). 
 Our results show that a high need for coordination leads to both high cultural inertia and a low exploration rate, while a low need for coordination leads to low cultural inertia and high exploration rate. This is the first work, to our knowledge, on understanding the evolutionary causal relationships among these factors.
 \end{abstract}

\thispagestyle{firststyle}

%\doublespacing

\section{Introduction}

{\let\thefootnote\relax\footnotetext{Code for all the simulations presented in this paper is available at: https://github.com/sohamde/inertia-aamas}

Human societies around the world are unique in their ability to develop, maintain, and enforce social norms. Social norms enable individuals in a society to coordinate actions, and are critical in accomplishing different tasks. Neuroscience, field, and experimental research have all established that there are marked differences in the strength of social norms around the globe \cite{balliet2013trust, ensminger2014experimenting, gelfand2011differences, harrington2014tightness, henrich2010markets, henrich2006costly, herrmann2008antisocial, roos2015societal}. Some cultures (e.g., some middle-eastern countries, India, South Korea, etc.) are \emph{tight}, in the sense that they tend to have strong social norms, with a high degree of norm-adherence and higher punishment directed towards norm-violators. Other cultures (e.g., Netherlands, New Zealand, Australia, etc.) are \emph{loose}, i.e., individuals tend to develop weaker norms with more tolerance for deviance \cite{gelfand2011differences, harrington2014tightness, roos2015societal}. This indicates that the nature of human interaction and influence is vastly different across different cultures around the world. 

To date, there has been little research on
the evolutionary processes of norm maintenance and the processes that lead to norm change, and how these processes are substantially different in societies around the world.
However, recent world events (e.g., recent social uprisings and turmoil) show that it is critically important to develop such an understanding \cite{yan2012, latimes2011, reuters2011, nyreview2016}. In this paper, we draw ideas from recent social science research to build culture-sensitive models that provide insights into the substantial societal differences that exist in how individuals interact and influence each other.

Although evolutionary game theory (EGT) was first developed to model biological evolution \cite{hofbauer1998evolutionary, smith1982evolution, weibull1997evolutionary}, it also has become useful as a way to model cultural evolution (for examples, see Section \ref{sec:background}). In this paper, we use EGT to examine the relationships of the amount of \emph{need for coordination}
(which psychological and sociological studies show is related to norm strength \cite{roos2015societal}), with two key aspects of norm change in societies: 
\begin{enumerate*}[label=(\roman*)]
\item the amount of \emph{cultural inertia}, i.e., the amount of resistance to changing a cultural norm, and
\item the \emph{exploration rate}, i.e., the extent to which agents are willing to try out new behaviors.
\end{enumerate*}
More specifically, our primary contributions in this paper are as follows:

\begin{itemize}
\item
We provide a novel way to 
\begin{enumerate*}[label=(\roman*)]
  \item model a society's strength of norms by using an agent's need for coordination in the society, and
  \item model the desirable/undesirable norms in a society.
\end{enumerate*}
This is done by characterizing how they affect the payoffs in a  game-theoretic payoff matrix, leading to different interaction structures and incentives in a society.
%To represent a need for norm change, we use a \emph{structural shock}
%in which the previously-desirable norm becomes undesirable and vice versa.

\item
We investigate cultural evolution of norm change in this model using two well-known models of change in evolutionary game theory
(the replicator dynamic \cite{taylor1978evolutionary} and the Fermi rule \cite{blume1993statistical}).
Using mathematical analyses and extensive agent-based simulations, we establish that: \emph{the higher the need for coordination is, the higher the cultural inertia will be, and vice versa}. When a population faces conditions that make a norm change desirable, a high need for coordination will make them slower to change to the new norm compared to a society with a lower need for coordination. Further, if the need for coordination is high enough, the existing norm will not change at all.

\item
In order to understand how norms change in different cultures, we also examine whether the need for coordination in a society has a causal evolutionary relationship to an agent's tendency to learn socially (i.e., adopt a behavior that is being used by other agents in the population) versus innovate/explore new random behaviors. In order to be able to do so, we propose a novel way to model this, where we let the exploration rate, i.e., the probability that an agent tries out a new action at random, \emph{evolve} over time as part of the agent's strategy, rather than stay fixed as in previous work \cite{traulsen2009exploration}. %EGT models of cultural evolution sometimes include an \emph{exploration rate}, which is a small probability $\mu$ that agents may try out new actions \cite{traulsen2009exploration}. Rather than holding $\mu$ fixed as in previous work, we modify the model to allow $\mu$ to evolve as part of the agents' strategies.

\item 
The cultural differences in the distribution of agent strategies favoring social learning versus innovation or exploration can have a critical impact on how attitudes, beliefs and behaviors spread throughout the population, and thus, is vital to understanding norm change. At a societal level, such differences can affect the rate at which new technologies, languages, moral traditions, and political institutions are adopted, while at local levels, they can alter the effectiveness of persuasion methods at the individual level. Using the above model of evolving exploration rates, we verify this by establishing, via extensive agent-based simulations, that: \emph{the higher the need for coordination is, the lower the exploration rate will be, and vice versa}.
\end{itemize}

%To our knowledge, this is the first study to show that the strength of norms in a society lead to fundamentally different processes of interaction and norm change in societies around the world, and we hope that this work starts a new focus towards understanding the differences in norm propagation in societies around the world.
These results provide insight into the reasons why tight societies are less open to change, and why cultural inertia and high levels of social learning develop in such societies. 
To our knowledge, 
this is the first work to provide a culturally-sensitive model of norm change and to show how the processes of norm propagation differ across societies.

The rest of the paper is organized as follows. Section 2 includes background and related work.  Section 3 provides our model of the need for coordination, and mathematical analyses and agent-based simulations showing how it affects cultural inertia. Section 4 describes our model of evolving exploration rates, and shows how the degree of need for coordination affects the evolution of exploration rates.  In Section 5 we discuss the significance of our results.

\section{Background and Related Work}
\label{sec:background}

EGT offers a simple framework for dealing with large populations of interacting individuals, where individuals interact using different strategies, leading to game-theoretic payoffs that denote an individual's evolutionary fitness. EGT was first developed to model biological evolution \cite{hofbauer1998evolutionary, smith1982evolution, weibull1997evolutionary}. In such models, high-fitness individuals are more likely to reproduce than low-fitness individuals, and hence, the strategies used by those high-fitness individuals become more prevalent in the population over time. Thus, EGT studies the evolution of populations, without requiring the usual decision-theoretic `rationality' assumptions typically used in classical game theory models.

EGT models have also been used in studies of a wide variety of social and cultural phenomena \cite{de2016using}, e.g., cooperation \cite{bowles2004evolution, hamilton1981evolution, nowak1992tit, nowak2006five, riolo2001evolution}, punishment \cite{boyd2003evolution, brandt2003punishment, brandt2006punishing, rand2011evolution, roos2014high}, ethnocentrism \cite{de2015inevitability, hammond2006evolution, hartshorn2013evolutionary}, etc. 
In EGT models of cultural evolution, biological reproduction is replaced by social learning: 
if an individual uses some strategy that produces high payoffs, then others are more likely to adopt the same strategy.

EGT models of cultural evolution use highly simplified abstractions of complex human interactions, designed to capture only the essential nature of the interactions of interest. These models do not give exact numeric predictions of what would happen in real life; but they are helpful for studying the underlying dynamics of different social processes, by establishing causal relationships between various factors and observed evolutionary outcomes. Since the evolution of a human culture over time is virtually impossible to study in laboratory settings or field studies, EGT modeling provides a useful tool to apply to the study of culture and norms.

% Relating our work to the literature on the emergence of norms is a good idea. While a major focus of that work has been to study seed nodes and initial configurations for which adoption takes place, our focus is on the underlying dynamics of different societies that may or may not lead to norm change. We use EGT to make our results robust to the initial configuration.

\section{Proposed Model}
\label{proposed_model}

\begin{figure}[tb]
\begin{center}
$M_c = $
\begin{tabular}{ |c|c|c| }
\cline{1-3}
 & $A$ & $B$ \\
\cline{1-3}
$A$ & $a_c,a_c$ & $0,0$ \\
\cline{1-3}
$B$ & $0,0$ & $b_c,b_c$ \\
\cline{1-3}
\end{tabular} 
\hspace{10mm}
$M_f = $
\begin{tabular}{ | c | c | c | }
\cline{1-3}
 & $A$ & $B$ \\
\cline{1-3}
$A$ & $a_f, a_f$ & $ a_f,  b_f$ \\
\cline{1-3}
$B$ & $b_f,  a_f$ & $ b_f, b_f$ \\
\cline{1-3}
\end{tabular} 
\caption{\label{fig:ind_payoffs} Individual payoff matrices. $M_c$ denotes the coordination game and $M_f$ denotes the fixed-payoff game used in our model.}
\end{center}
\end{figure}

Past field and experimental research have shown that tight societies have stronger norms, where individuals adhere to norms much more than loose societies, and face higher punishment when deviating. On the other hand, individuals in loose societies typically have more tolerance for deviant behavior \cite{gelfand2011differences, harrington2014tightness, roos2015societal}. Past EGT studies have shown that a society's exposure to \emph{societal threat} is a key mediating factor in its strength of norms \cite{roos2015societal}, where threats can be either ecological like natural disasters and scarcity of resources, or manmade such as threats of invasions and conflict. In high-threat situations, societies tend to develop strong norms for coordinating social interaction, (i.e., to become tighter), since coordination is vital for the society's survival. In low-threat situations, there is less need for coordination, which affords weaker norms and looser societies.

\begin{figure*}[tb]
\begin{center}
\begin{tabular}{ c }
$M$ =
\begin{tabular}{ |c|c|c| }
\cline{1-3}
 & $A$ & $B$ \\
\cline{1-3}
$A$ & $ca_c + (1-c)a_f, ca_c + (1-c)a_f$ & $(1-c)a_f,(1-c)b_f$ \\
\cline{1-3}
$B$ & $(1-c)b_f,(1-c)a_f$ & $cb_c + (1-c)b_f, cb_c + (1-c)b_f$ \\
\cline{1-3}
\end{tabular}
\end{tabular} 
\caption{\label{tab:weighted_payoff} Weighted payoff matrix $M$ used in our model defined as $M = cM_c + (1-c)M_f$.}
\end{center}
\end{figure*}

Using this intuition, we hypothesize that individuals in different societies interact using \emph{different} payoff structures and incentives. Tight societies tend to have a high need for coordination, and we can model the extreme case as a \emph{coordination game} $M_c$, where one only gets a payoff if playing the same action as the agent one is interacting with. In loose societies, on the other hand, individuals' payoffs are less affected by others' actions, and we can model the extreme case as a \emph{fixed-payoff} game $M_f$, in which an agent's payoff depends only on the action played by that agent, and not on the actions of the other agent. For cases in between the two extremes, we use a game in which the payoff matrix is a weighted combination of a coordination game and a fixed-payoff game, with the weighting factor $0 \leq c \leq 1$ denoting the need for coordination.

As is done in many EGT studies, we consider games in which individuals have two possible actions to choose from. In our case, the two actions $A$ and $B$ correspond to possible norms that the society could settle on. 
As shown in Figure \ref{fig:ind_payoffs}, the coordination game
has a payoff matrix $M_c$ in which $a_c$ and $b_c$ are the payoff parameters; and the fixed-payoff game
has a payoff matrix $M_f$ in which $a_f$ and $b_f$
are the payoff parameters.
The weighted combination of the two games, shown in Figure \ref{tab:weighted_payoff}, is
$M = cM_c + (1-c)M_f$, where $0 \leq c \leq 1$ is the need for coordination.
\begin{figure}
\begin{center}
$M' = $
\begin{tabular}{ |c|c|c| }
\cline{1-3}
 & $A$ & $B$ \\
\cline{1-3}
$A$ & $a$, $a$ & $(1-c)a$, $(1-c)b$ \\
\cline{1-3}
$B$ & $(1-c)b$, $(1-c)a$ & $b$, $b$ \\
\cline{1-3}
\end{tabular} 
\caption{Updated payoff matrix after assuming $a_c - b_c = a_f - b_c$ and adding a suitable constant to the payoffs in $M$ in Figure \ref{tab:weighted_payoff}.}
\label{fig:new_payoffs}
\end{center}
\end{figure}

We first present a lemma that shows that under a mild assumption, the payoff matrix $M$ can be much simplified on adding a constant to all payoffs in the matrix. \\

\begin{lemma}
\label{affine_lemma}
Consider the game matrix $M$ defined in Figure \ref{tab:weighted_payoff}, and assume that $a_c - b_c = a_f - b_f$. Then, under a suitable addition of a constant to the payoffs, and denoting $a_c = a$ and $b_c = b$, the game matrix $M$ reduces to the matrix $M'$ shown in Figure \ref{fig:new_payoffs}.
\end{lemma}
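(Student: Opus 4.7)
The lemma is a bookkeeping claim: find a constant $K$ such that adding $K$ to every payoff of $M$ and relabeling $a_c = a$, $b_c = b$ reproduces $M'$. So my plan is to solve for $K$ entry by entry and show that the hypothesis $a_c - b_c = a_f - b_f$ is exactly the compatibility condition that makes one such $K$ work simultaneously for all four cells.

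First I would match the two diagonal cells. The $(A,A)$ entry $c a_c + (1-c) a_f$ must equal $a = a_c$, which forces $K = (1-c)(a_c - a_f)$. Symmetrically, the $(B,B)$ entry forces $K = (1-c)(b_c - b_f)$. For a single $K$ to satisfy both requirements we need $(1-c)(a_c - a_f) = (1-c)(b_c - b_f)$; for $c < 1$ this is equivalent to $a_c - a_f = b_c - b_f$, which rearranges to the stated hypothesis $a_c - b_c = a_f - b_f$ (and when $c = 1$ the fixed-payoff component disappears and the equality is vacuous). So the hypothesis is exactly what is needed.

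It then remains to check the two off-diagonal cells. Each has the form $((1-c) a_f,\, (1-c) b_f)$ or its transpose; adding $K = (1-c)(a_c - a_f) = (1-c)(b_c - b_f)$ to both components turns these into $((1-c) a_c,\, (1-c) b_c) = ((1-c)a,\, (1-c)b)$ and its transpose, matching $M'$ exactly. The argument is a one-parameter linear consistency check, so there is no real obstacle; the only point worth emphasizing is the standard fact that adding the same constant to every payoff of a symmetric bimatrix game leaves both the replicator dynamic and the Fermi update rule invariant, so replacing $M$ by $M'$ loses no information for the evolutionary analysis carried out in the rest of the paper.
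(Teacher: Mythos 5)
Your proof is correct and follows the same basic route as the paper's (exhibit an additive constant and verify all four cells), but your bookkeeping is more careful than the paper's own one-line argument, and it quietly fixes an imprecision in it. The paper asserts that adding $\gamma = a_c - a_f = b_c - b_f$ to \emph{all payoffs of $M$} yields $M'$; as your cell-by-cell computation shows, the constant that actually works is $K = (1-c)\gamma$, since the $(A,A)$ entry of $M$ plus $\gamma$ is $ca_c + (1-c)a_f + \gamma = a_c + c\gamma$, which is not $a_c$ unless $c = 0$ or $\gamma = 0$. (The paper's claim can be salvaged by reading it as adding $\gamma$ to the payoffs of the fixed-payoff game $M_f$ \emph{before} forming the weighted combination $cM_c + (1-c)M_f$, which is equivalent to adding $(1-c)\gamma$ to $M$.) Your further observations --- that the hypothesis $a_c - b_c = a_f - b_f$ is exactly the compatibility condition making the two diagonal constraints consistent, that the case $c = 1$ is vacuous, and that the resulting shift is harmless because the replicator dynamic and the Fermi rule depend only on payoff differences --- are all correct and match claims the paper makes elsewhere in Section \ref{proposed_model} rather than inside this proof.
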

\begin{proof}
Let the constant $\gamma$ be defined as $\gamma = a_c - a_f = b_c - b_f$ (where the second equality holds under the assumption). On adding $\gamma$ to all payoffs in $M$, the payoff matrix $M$ reduces to $M'$, shown in Figure \ref{fig:new_payoffs}, where we denote $a_c = a$ and $b_c = b$.
\end{proof}

The assumption $a_c - b_c = a_f - b_f$ is very reasonable, since this just ensures that switching from one norm to the other always results in the same change in payoffs, regardless of the weight $c$ on the coordination game. Otherwise, there would be an added causal factor for the dynamics of norm change. Also note that, from Lemma \ref{affine_lemma}, under additions with a constant, this assumption reduces to just setting $a_c = a_f$ and $b_c = b_f$. For the rest of the paper, we will work with payoff matrix $M'$ where we set $a_c = a_f = a$ and $b_c = b_f = b$. In subsequent sections, we will show why simplifying the payoff matrix by adding a constant value to all payoffs (as shown in Lemma \ref{affine_lemma}) is a perfectly reasonable step to take.

%\begin{remark}
%\label{dominant_norm}
%In the game $M$,
%let $M_{AB}$ be the payoff that an agent receives when they play action $A$ and their opponent plays action $B$. Let $M_{AA}$, $M_{BA}$ and $M_{BB}$ be defined similarly. Then, $M_{BB} > M_{AA}$ only when $b_c > a_c$, and $M_{BB} < M_{AA}$ only when $b_c < a_c$. 
%\end{remark}

From payoff matrix $M'$, we see that whenever $b < a$, the better action for the society to settle on (in terms of payoff) is $A$, while if $a < b$ then it is $B$. Let $M'_{AB}$ be the payoff that an agent receives when they play action $A$ and their opponent plays action $B$. Let $M'_{AA}$, $M'_{BA}$ and $M'_{BB}$ be defined similarly.  Studying the Nash equilibrium of the game $M'$, we get the following lemma. The proof is presented in the appendix. \\

%The Nash Equilibrium (NE) \cite{nash1951non} is a well-known concept in game theory used to denote a strategy set used by the agents where no agent can do any better by unilaterally switching to a different strategy. In other words, each agent's strategy is a best response to all other agents' strategies. Studying the possible Nash Equilibria for the game $M$, we get the following lemma. % (please see Supplementary Material for more details):
%\dana{I commented-out the explanation of Nash equilibrium. I think most readers will know what it means, hence trying to explain it may seem patronizing.}

\begin{lemma}
\label{two_player_nash}
Consider the game matrix $M'$ defined in Figure \ref{fig:new_payoffs}, where all payoff values are positive, i.e., $a, b > 0$. Then we have:
\begin{enumerate}[label=(\roman*)]
\item If $b > a$, the strategy profile $(B, B)$ is a Nash Equilibrium. Further, if $c \ge \frac{b - a}{b}$, then $(A, A)$ is also a Nash equilibrium. Further, the strategy profile $((q, 1-q), (q, 1-q))$ is a Nash Equilibrium only when $c \ge \frac{b - a}{b}$, where $q = \frac{b - (1-c)a}{c(a + b)}$. Note that the mixed strategy $(q, 1-q)$ denotes playing action $A$ with probability $q$ and action $B$ with probability $1-q$.
\item Similarly, if $a > b$, the strategy profile $(A, A)$ is a Nash Equilibrium. Further, if $c \ge \frac{a - b}{a}$, then the strategy profile $(B, B)$, as well as $((q, 1-q), (q, 1-q))$ are also Nash Equilibria, with $q = \frac{b - (1-c)a}{c(a + b)}$. \\
\end{enumerate}
\end{lemma}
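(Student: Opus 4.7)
The proof is a direct check of best-response conditions in a symmetric $2\times 2$ game, so my plan is simply to enumerate the candidate equilibria and, for each, determine exactly when no unilateral deviation is profitable. Since $M'$ is symmetric, I only need to verify the best-response condition for a single player; the opponent's condition is identical.

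First I would handle the two pure-strategy profiles. For $(A,A)$, the incumbent payoff is $a$ and the deviation payoff (switching to $B$ while the opponent plays $A$) is $(1-c)b$, so $(A,A)$ is a Nash equilibrium iff $a\ge(1-c)b$, equivalently $c\ge(b-a)/b$. This is automatic when $a\ge b$ (case (ii)) and gives the stated threshold when $b>a$ (case (i)). Symmetrically, for $(B,B)$ the incumbent payoff is $b$ and the deviation payoff is $(1-c)a$, so $(B,B)$ is a Nash equilibrium iff $b\ge(1-c)a$, i.e.\ $c\ge(a-b)/a$; this is automatic when $b\ge a$ and gives the stated threshold when $a>b$.

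Next I would handle the symmetric mixed profile $((q,1-q),(q,1-q))$. Using the indifference principle, I would set the expected payoffs from playing $A$ and from playing $B$ against the opponent's mixture equal:
\begin{equation*}
a\bigl[q+(1-q)(1-c)\bigr]\;=\;b\bigl[(1-c)q+(1-q)\bigr],
\end{equation*}
and solve algebraically to obtain $q=\frac{b-(1-c)a}{c(a+b)}$, matching the statement. The remaining task is to determine when this $q$ lies in $(0,1)$, since only then is it an actual mixed equilibrium. A short computation shows $q>0\iff b>(1-c)a$ and $q<1\iff a>(1-c)b$. In case (i) ($b>a$), the first inequality is automatic and the second is equivalent to $c\ge(b-a)/b$; in case (ii) ($a>b$), the second is automatic and the first reduces to $c\ge(a-b)/a$. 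Combining these pieces yields the two parts of the lemma exactly as stated.

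I do not anticipate a real obstacle here; this is essentially routine best-response bookkeeping in a symmetric coordination-type game. The only point requiring a little care is keeping track of which of the two inequalities $q>0$ and $q<1$ is the binding one in each of the cases $b>a$ and $a>b$, and noting that the threshold for the mixed equilibrium to exist coincides with the threshold for the ``minority'' pure equilibrium ($(A,A)$ when $b>a$, or $(B,B)$ when $a>b$) to appear, which is the source of the clean symmetric statement of the lemma.
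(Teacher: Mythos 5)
Your proof is correct and follows essentially the same route as the paper's: best-response checks $M'_{AA}\ge M'_{BA}$ and $M'_{BB}\ge M'_{AB}$ for the pure profiles, then the indifference condition for the mixed profile together with the requirement $q\in[0,1]$, which reproduces the same thresholds. (Incidentally, your indifference equation is written with the subscripts in the correct places, whereas the paper's Eq.\ \eqref{twop_mixed} has the inner indices transposed as a typo; the value of $q$ the paper reports is the one your equation yields.)
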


From Lemma \ref{two_player_nash}, we see that only when $c$ is high enough, the \emph{sub-optimal} action pair becomes a Nash Equilibrium, where \emph{sub-optimal} refers to the fact that the action pair that has lower payoff than the optimal action pair. This means that when $b > a$, $(A, A)$ is the sub-optimal action pair. Thus, from Lemma \ref{two_player_nash}, we that see if the need for coordination $c$ is high, then the population may converge to either of two different equilibria, one of which is sub-optimal in terms of overall payoff.
When $c$ is low, on the other hand, the society will converge to a single globally-optimal equilibrium.

In the next two sub-sections we introduce two models for studying norm change, using two well-known models of evolutionary change (the replicator dynamic \cite{taylor1978evolutionary} and the Fermi rule \cite{blume1993statistical}). We show that both models of evolutionary change are invariant to additions to the payoffs by a constant, and thus the results from this section carry forward. We present results for how different societies respond to a need for norm change using both mathematical analysis on infinite \emph{well-mixed} populations (where well-mixed denotes that any agent can interact with any other agent in the population), and extensive agent-based simulations on finite structured populations (where agents are placed on a network and can interact with only their neighbors).

\subsection{Replicator dynamic on infinite well-mixed populations}
\label{sec:well_mixed}

Consider a well-mixed infinite population of agents. This is a standard setting used in evolutionary game theory, since a well-mixed infinite population is usually analytically tractable. Let the agents be interacting with each other using game matrix $M'$ defined in Figure \ref{fig:new_payoffs}, and the proportion of agents playing each strategy be denoted by $x = (x_A, x_B)$, i.e., $x_A$ proportion of agents with strategy $A$, and proportion $x_B = 1 - x_A$ with strategy $B$. Also, let $u_A(x)$ and $u_B(x)$ denote the payoffs received by an agent playing actions $A$ and $B$ respectively. The expected payoff for an agent is given by interacting with a randomly chosen agent in the population. Thus, we get the following:
\begin{align*}
\mathbb{E} [u_A(x)] &= x_A M'_{AA} + x_B M'_{AB},  \\
\mathbb{E} [u_B(x)] &= x_A M'_{BA} + x_B M'_{BB}.
\end{align*}
On analyzing the Nash Equilibria of this system, we observe the following lemma. The proof is presented in the appendix. \\

\begin{lemma}
\label{well_mixed_nash}
Consider a well-mixed infinite population where agents interact using the game $M'$ in Figure \ref{fig:new_payoffs}. Assuming all payoff values are positive, i.e., $a, b > 0$, and using Lemma \ref{two_player_nash}, we have:
\begin{enumerate}[label=(\roman*)]
\item When $b > a$, $x_A = 0$ is a Nash Equilibrium. If $c \ge \frac{b - a}{b}$, then $x_A = 1$ and $x_A =  \frac{b - (1-c)a}{c(a + b)}$ (which corresponds to the mixed-strategy Nash Equilibrium in Lemma \ref{two_player_nash}) are also Nash Equilibria.
\item Similarly, when $a > b$, $x_A = 1$ is a Nash Equilibrium, while if $c \ge \frac{a - b}{a}$, then $x_A = 0$ and $x_A = \frac{b - (1-c)a}{c(a + b)}$ also are Nash Equilibria. \\
\end{enumerate}
\end{lemma}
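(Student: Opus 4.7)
The plan is to reduce this to Lemma \ref{two_player_nash} by invoking the standard correspondence between Nash equilibria in a well-mixed infinite population and symmetric Nash equilibria in the underlying two-player game. Since every agent's expected payoff $\mathbb{E}[u_A(x)]$ and $\mathbb{E}[u_B(x)]$ is computed against an opponent drawn uniformly from the population, playing a pure action against the population distribution $(x_A, 1-x_A)$ is payoff-equivalent to playing that action against a single opponent using the mixed strategy $(x_A, 1-x_A)$ in $M'$. Consequently, a population state $x_A$ is a Nash equilibrium (no agent gains by unilateral deviation) if and only if the symmetric profile $((x_A,1-x_A),(x_A,1-x_A))$ is a Nash equilibrium of the two-player game $M'$.

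Given this correspondence, the proof consists of translating each symmetric equilibrium from Lemma \ref{two_player_nash}: the pure profile $(A,A)$ becomes the population state $x_A = 1$, the pure profile $(B,B)$ becomes $x_A = 0$, and the mixed symmetric profile $((q,1-q),(q,1-q))$ with $q = \frac{b-(1-c)a}{c(a+b)}$ becomes the interior state $x_A = q$. Reading off the conditions from Lemma \ref{two_player_nash} gives exactly the statement of the present lemma in both cases ($b > a$ and $a > b$).

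To make the argument self-contained, I would also briefly verify at least one of the equilibrium conditions directly from the formulas for $\mathbb{E}[u_A(x)]$ and $\mathbb{E}[u_B(x)]$ using the entries of $M'$: for the interior equilibrium, setting $\mathbb{E}[u_A(x)] = \mathbb{E}[u_B(x)]$ yields
\begin{equation*}
x_A\, a + (1-x_A)(1-c)a \;=\; x_A (1-c)b + (1-x_A)\, b,
\end{equation*}
which simplifies to $x_A \, c(a+b) = b - (1-c)a$, recovering the same $q$. For $x_A = 1$ to be a best response to itself we need $M'_{AA} \ge M'_{BA}$, i.e.\ $a \ge (1-c)b$, which is exactly $c \ge (b-a)/b$ when $b > a$; analogous checks handle $x_A = 0$ and the symmetric case $a > b$.

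There is no real obstacle: the only thing to be careful about is the feasibility of the interior root, namely checking that $q \in [0,1]$ exactly when the coordination weight $c$ passes the threshold in Lemma \ref{two_player_nash}, so that the mixed equilibrium appears only in the regime where the sub-optimal pure equilibrium also exists. Once that boundary accounting is matched with the cases of Lemma \ref{two_player_nash}, the lemma follows immediately.
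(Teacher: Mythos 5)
Your proposal is correct and follows essentially the same route as the paper: the paper likewise checks the pure states $x_A=0$ and $x_A=1$ against the best-response inequalities $M'_{BB}\ge M'_{AB}$ and $M'_{AA}\ge M'_{BA}$ inherited from Lemma \ref{two_player_nash}, and obtains the interior equilibrium by equating the expected payoffs of $A$- and $B$-players, yielding the same $p = \frac{b-(1-c)a}{c(a+b)}$ with feasibility governed by the same thresholds on $c$.
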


We assume that on each iteration, agents interact with other randomly chosen agents, and the population evolves according to the replicator dynamic. 
 The replicator dynamic is based on the idea that the proportion of agents of a type (or strategy) increases when it  achieves expected payoff higher than the average payoff, and decreases when achieving lower payoff than the average payoff. Thus, over time, the proportion of agents of a type that achieves payoff higher than the average payoff starts increasing in the population, and eventually take over. More formally, the replicator dynamic is given by the differential equation
\begin{align}
\dot{x}_A =\frac{dx_A}{dt} = x_A \cdot (\mathbb{E}[u_A(x)] - \theta(x)), \label{replicator_inf}
\end{align}
where $\theta(x) = x_A \mathbb{E}[u_A(x)] + x_B \mathbb{E}[u_B(x)]$ is the average payoff received by all agents in the population. From \eqref{replicator_inf}, it is clear that the rate of change remains the same on adding a constant to the payoff matrix, since the added constants would just cancel each other out. Thus, Lemma \ref{affine_lemma} follows through to this section as well.

Using the game matrix $M'$, the rate of change in the proportion $x_A$ is given by:
\begin{align}
\dot{x}_A  = x_A (1-x_A) (c(a + b)x_A - (b - (1-c)a)).
\label{replicator}
\end{align}
The fixed points of this rate of change are given by:
\begin{align}
\label{fixed_points}
x_A = 0, \quad x_A =1, \quad \text{and} \quad x_A = \frac{b - (1-c)a}{c(a + b)}.
\end{align}
These correspond to the Nash Equilibria derived earlier. Next, we study the stability of the Nash equilibria derived above, where we define a stable Nash equilibrium under the replicator dynamic to be one where: if an infinitesimal proportion of agents change their strategy, the replicator dynamic always forces the population back to the original Nash equilibrium. More precisely, let the Nash equilibrium be $x_A = p$. If $x_A$ increases an infinitesimal amount to $p+\epsilon$, the Nash equilibrium is stable only if $\dot{x}_A < 0$, which drives the population back to the Nash equilibrium $x_A = p$. Similarly, if $x_A$ decreases by $\epsilon$, $x_A = p$ is stable only if $\dot{x}_A > 0$. %Thus, a small perturbation from an unstable Nash equilibrium would force the population to converge to a different Nash equilibrium.
Thus, we state the following corollary. \\

%If an infinitesimal change in the proportion of agents playing each strategy leads the population to converge to a different Nash equilibrium, the original Nash equilibrium is called unstable. Thus, we state the following corollary.

%\dana{I changed ``equilibriums'' to ``equilibria'' throughout.}
%\soham{define stability}

\begin{corollary}
\label{stability}
From Lemma \ref{well_mixed_nash} and  Eq.\ \eqref{replicator} and  Eq.\ \eqref{fixed_points}, we see that the Nash Equilibria $x_A = 0$ and $x_A = 1$ are stable, while the Nash Equilibrium $x_A = \frac{b - (1-c)a}{c(a + b)}$ is unstable.
\end{corollary}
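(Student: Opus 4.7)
My plan is to exploit the factored form of the replicator equation already given in Eq.\ \eqref{replicator}. Writing $x^\ast := \frac{b-(1-c)a}{c(a+b)}$ for the interior fixed point, I would rewrite Eq.\ \eqref{replicator} as
\begin{equation*}
\dot{x}_A \;=\; c(a+b)\,x_A(1-x_A)(x_A - x^\ast),
\end{equation*}
since $c(a+b) > 0$ by positivity of the payoffs. The stability of each fixed point then reduces to a sign check of the three factors $x_A$, $1-x_A$, and $x_A - x^\ast$ under a small perturbation, which is the definition of stability introduced just before the corollary statement.

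Before doing the perturbation analysis, I would quickly verify that whenever $x^\ast$ qualifies as a Nash equilibrium in Lemma \ref{well_mixed_nash}, it actually lies strictly in $(0,1)$. In case (i) ($b>a$), positivity of $b-(1-c)a$ is automatic since $b-a>0$ and $ca\geq 0$, while $x^\ast \leq 1$ translates to $c \geq (b-a)/b$, which is exactly the hypothesis of Lemma \ref{well_mixed_nash}(i); case (ii) is symmetric. This ensures the three factors have well-defined signs in a neighborhood.

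The perturbation analysis is then immediate. For $x_A = 0$ perturbed to $\epsilon > 0$: the factors $\epsilon$ and $1-\epsilon$ are positive, while $\epsilon - x^\ast < 0$ for small $\epsilon$ (either because $x^\ast \in (0,1)$, or because the interior point does not exist and lies outside $[0,1]$, which I check case-by-case from Lemma \ref{well_mixed_nash}), so $\dot{x}_A < 0$ and the trajectory is pushed back. For $x_A = 1$ perturbed to $1-\epsilon$: $1-\epsilon > 0$, $\epsilon > 0$, and $1-\epsilon - x^\ast > 0$ for small $\epsilon$, giving $\dot{x}_A > 0$ toward $1$. For the interior equilibrium $x_A = x^\ast \pm \epsilon$: both $x^\ast \pm \epsilon$ and $1-(x^\ast\pm\epsilon)$ remain positive, so the sign of $\dot{x}_A$ matches the sign of $\pm\epsilon$, meaning a positive perturbation grows and a negative one shrinks — the hallmark of instability.

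The only mild obstacle is bookkeeping: handling the degenerate subcases in which $x^\ast$ falls outside $[0,1]$ so that the interior equilibrium is absent, where one still needs to confirm that the relevant boundary equilibrium keeps the correct sign for the factor $x_A - x^\ast$. I would dispatch these by splitting on the two cases of Lemma \ref{well_mixed_nash} and observing that whenever $x_A = 0$ (resp.\ $x_A = 1$) is a Nash equilibrium, either $x^\ast \in (0,1)$ or $x^\ast$ lies on the far side of the opposite endpoint, so the sign analysis goes through unchanged.
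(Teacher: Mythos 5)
Your proposal is correct and follows essentially the same route as the paper's proof: a perturbation of each fixed point followed by a sign check of $\dot{x}_A$ using the factored form of Eq.\ \eqref{replicator}. The only difference is that you are more explicit about pulling out the factor $c(a+b)$ and about verifying that $x^\ast$ lies in $(0,1)$ (or outside $[0,1]$ in the degenerate subcases), details the paper's terser proof leaves implicit.
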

\begin{proof}
Let $\phi =  \frac{b - (1-c)a}{c(a + b)}$. From Eq.\ \eqref{replicator}, we notice that, if $x_A = \phi + \epsilon$, then $\dot{x}_A > 0$, while if $x_A = \phi - \epsilon$, then $\dot{x}_A < 0$, for any small $\epsilon > 0$. Thus, $x_A = \frac{b - (1-c)a}{c(a + b)}$ represents an unstable fixed point. Similarly notice that if $x_A = \epsilon$, $\dot{x}_A < 0$, while if $x_A = 1 - \epsilon$, $\dot{x}_A > 0$. Thus, $x_A = 0$ and $x_A =1$ represent stable fixed points.
\end{proof}

There is a further notion of equilibrium used in EGT called evolutionarily stable strategies (ESS) \cite{smith1982evolution}. A strategy $S$ is an ESS if there is a small proportion $p_y$ such that, when any other strategy $T$ has a proportion $p_x < p_y$ (where the rest of the population has strategy $S$), the payoff of an $S$ agent is always strictly greater than a $T$ agent. Using this definition, we state the following theorem. The proof is presented in the appendix. \\

%A strategy is called an ESS when, if adopted by the population, it cannot be taken over by an initially small proportion of agents with a different strategy. Thus, we state the following theorem.

%\dana{The phrase ``majority of the population'' doesn't make sense, since equilibria in symmetric matrices always have the form $(s,s)$ where $s$ is the strategy.}
%\soham{define ESS}

\begin{theorem}
\label{ess_thm}
From Lemma \ref{well_mixed_nash} and Corollary \ref{stability}, we see:
%Consider a well-mixed infinite population where agents interact using the game $M$ in Figure \ref{tab:weighted_payoff}. Assuming all payoff values are positive, i.e., $a_c, b_c, a_f, b_f > 0$, and $a_c - b_c = a_f - b_f$, and using Lemma \ref{two_player_nash}, we have:
\begin{enumerate}[label=(\roman*)]
\item When $b > a$, $B$ is an ESS. If $c \ge \frac{b - a}{b}$, then $A$ is also an ESS.
\item When $a > b$, $A$ is an ESS. If $c \ge \frac{a - b}{a}$, then $B$ is also an ESS. \\
\end{enumerate}
\end{theorem}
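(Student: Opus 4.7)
The plan is to verify the standard pure-strategy ESS criterion directly from the payoff matrix $M'$. Recall that in a symmetric two-player game a pure strategy $S$ is an ESS against a pure alternative $T$ provided either (a) $M'_{SS} > M'_{TS}$, or (b) $M'_{SS} = M'_{TS}$ and $M'_{ST} > M'_{TT}$. Moreover, in a $2 \times 2$ symmetric game these pure-versus-pure comparisons also rule out mixed invaders, because the difference between resident and invader payoffs is linear in the invader's mixing weight, so strict dominance against both pure alternatives extends by convex combination to every mutant distribution.

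Reading the entries of $M'$ from Figure \ref{fig:new_payoffs}, we have $M'_{AA} = a$, $M'_{AB} = (1-c)a$, $M'_{BA} = (1-c)b$, and $M'_{BB} = b$. For part (i), assume $b > a$. To see that $B$ is an ESS I would compute $M'_{BB} - M'_{AB} = b - (1-c)a$, which is strictly positive since $b > a \geq (1-c)a$. To see that $A$ is an ESS under the stated threshold I would compute $M'_{AA} - M'_{BA} = a - (1-c)b$, which is strictly positive iff $c > (b-a)/b$. Part (ii) is then immediate from part (i) by the obvious symmetry $A \leftrightarrow B$, $a \leftrightarrow b$. These calculations line up exactly with the Nash-equilibrium structure given in Lemma \ref{well_mixed_nash} and the stability conclusions of Corollary \ref{stability}, so in principle the theorem can be viewed as the ESS restatement of those two results.

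The one genuine obstacle I anticipate is the boundary case $c = (b-a)/b$ in part (i) (and its counterpart in (ii)), where $M'_{AA} = M'_{BA}$ and only the tie-break clause $M'_{AB} > M'_{BB}$ could save the ESS property; but $M'_{AB} = (1-c)a < b = M'_{BB}$, so at the threshold $A$ is only \emph{neutrally} (weakly) stable rather than a strict ESS. I would handle this either by stating the theorem with strict inequality $c > (b-a)/b$, or by explicitly noting that equality yields only weak evolutionary stability and flagging this in the writeup. Aside from this boundary subtlety, the proof reduces to the two short payoff-difference computations above, together with the linearity argument that lifts pure-strategy dominance to arbitrary mixed invaders.
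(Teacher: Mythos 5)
Your proof is correct and follows essentially the same route as the paper: both reduce the ESS check to the payoff differences $M'_{AA}-M'_{BA} = a-(1-c)b$ and $M'_{BB}-M'_{AB} = b-(1-c)a$ and then apply the standard two-condition ESS criterion. Two points of comparison are worth noting. First, your handling of mixed invaders is actually slightly more complete than the paper's: you observe that strict dominance against the pure alternative extends to every mixed mutant by linearity of $M'_{TA}$ in the mixing weight, whereas the paper only explicitly tests the one particular mixed strategy $C=(q,1-q)$ corresponding to the interior Nash equilibrium (its remark that $M'_{AA}>M'_{CA}$ reduces to $M'_{AA}>M'_{BA}$ is the same linearity observation, just not stated for arbitrary mixtures). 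Second, your worry about the boundary case is well founded and is in fact confirmed by the paper's own proof: in its third case the paper shows that when $b>a$ and $c=\frac{b-a}{b}$, both tie-break conditions $M'_{AB}>M'_{BB}$ and $M'_{AC}>M'_{CC}$ fail, so $A$ is \emph{not} an ESS there. This means the theorem's statement with the weak inequality $c\ge\frac{b-a}{b}$ is inconsistent with its own proof, and your proposed fix (replace $\ge$ with $>$, or explicitly note that equality yields only neutral stability) is the right one.
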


\begin{figure*}[tb]
        \centering
        \begin{subfigure}[h]{0.32\textwidth}
                \includegraphics[width=\textwidth]{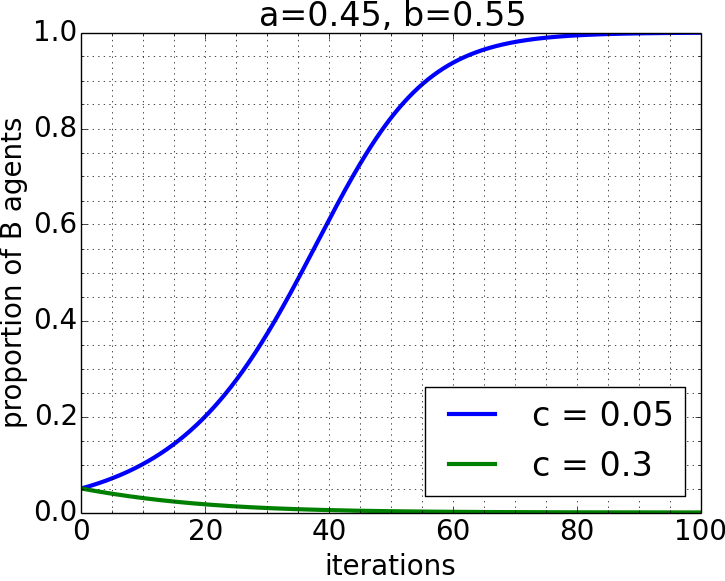}
        \end{subfigure}
        \begin{subfigure}[h]{0.32\textwidth}
                \includegraphics[width=\textwidth]{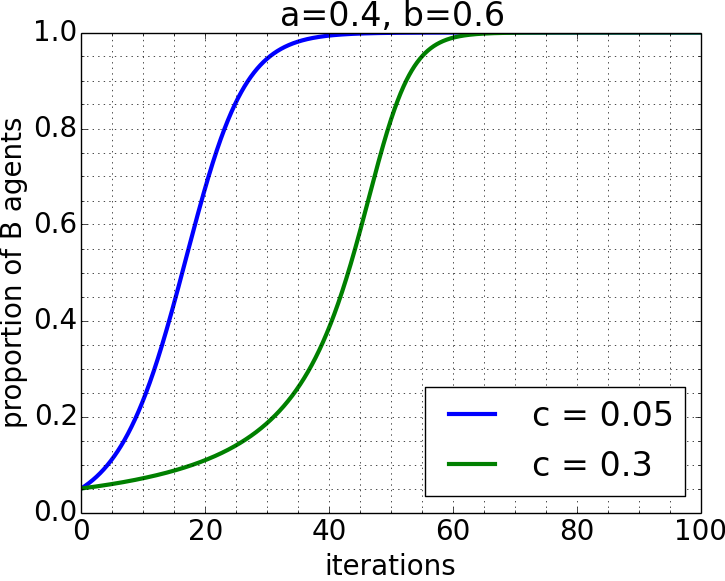}
        \end{subfigure}
        \begin{subfigure}[h]{0.32\textwidth}
                \includegraphics[width=\textwidth]{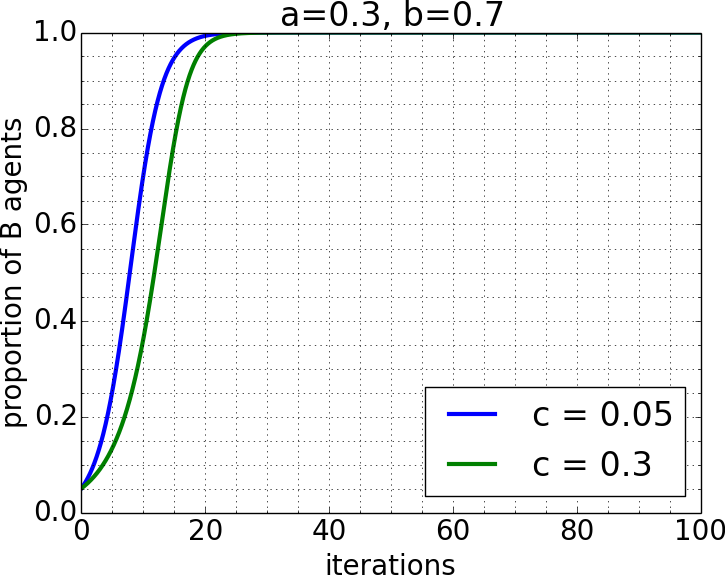}
        \end{subfigure}
         \caption{Figures show the change in the proportion of $B$ agents with time with a well-mixed infinite population where reproduction is determined by the replicator dynamic with $b > a$.}
         \label{well_mixed_inertia_sims}
\end{figure*}

%\dana{Discuss meaning and significance of this. Also, perhaps provide some background somewhere. Although many readers will know what a Nash equilibrium is, I think most of them won't know much about how EGT works. Some game theory textbooks (including the one I'm teaching from) don't even mention EGT.}

%
We observe  that the strategies $A$ and $B$ are Evolutionary Stable Strategies (ESS), when adopted by everyone in the population (corresponding to the stable Nash equilibria $x_A = 1$ and $x_A = 0$). The unstable Nash Equilibrium, on the other hand, does not correspond to an ESS, since even a small group with a different strategy is able to force the population to a different equilibrium. Thus, only stable Nash Equilibria correspond to evolutionarily stable strategies.

Theorem \ref{ess_thm} indicates that a society is bound to end up at one of the evolutionarily stable strategies: with every individual on action $A$ or everyone on action $B$, since even a small perturbation moves the society away from the unstable Nash equilibrium. When $c$ is low, there exists only a single ESS, and thus the society adapts itself and settles on the ESS. When $c$ is high, there are two ESSs, and thus the society might settle on either one, depending on the starting point of the society.
%\dana{Didn't we already say this in the paragraph before Section 3.1?} \soham{the previous statement was about NEs, but I agree that its getting a bit repetitive. probably can delete.}

\begin{figure}[tb]
        \centering
\includegraphics[width=0.4\textwidth]{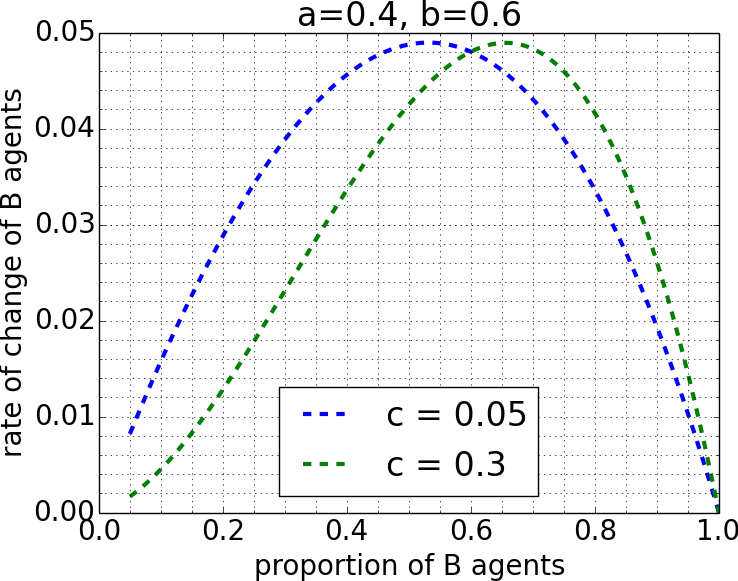}
         \caption{Figure shows the rate of change of $B$ agents versus the proportion of $B$ agents, with a well-mixed infinite population where reproduction is determined by the replicator dynamic with $b > a$.}
         \label{well_mixed_rate_sims}
\end{figure}

Let us consider two societies: one with a lower need for coordination $c_1$, and one with a high need for coordination $c_2 > c_1$. 
To avoid some awkward phrasing, we'll call these the ``looser'' and ``tighter'' societies, respectively.
Suppose a majority of both societies are playing norm $A$, and suppose they evolve according to the replicator dynamic given in Eq.\ \eqref{replicator}. We are interested in how these two societies would respond to
%\dana{not sure what that's supposed to mean.}
 the action $B$,
 when the payoff of action $B$ is higher than $A$, i.e., when $b > a$, or equivalently,
$M'_{BB} > M'_{AA}$. First notice that if $c_2 > (b - a)/b$, and $c_1 < (b - a)/b$,
it follows from Theorem \ref{ess_thm} that the tighter society remains on norm $A$ while the looser one switches to the globally optimal norm $B$.

Now suppose the difference in norm payoffs is large enough such that $c_2 < (b - a)/b$ (and thus also,  $c_1 < (b - a)/b$). This ensures that there is only a single equilibrium for both societies at $x_A = 0$. Thus, both societies would switch to norm $B$, and we are interested in the rate at which this change occurs. Let $\dot{x}_{B1}$ and $\dot{x}_{B2}$ denote the rate of change when the need to conform is $c_1$ or $c_2$, respectively. Then we can show that:
\begin{align*}
\dot{x}_{B2} - \dot{x}_{B1} = x_B (1 - x_B) (c_2 - c_1)((a + b)x_B - b).
\end{align*}
This simplifies to
\begin{align}
\dot{x}_{B2} - \dot{x}_{B1} \begin{cases}
	\le 0,& \text{when } x_B \le \frac{b}{a + b};\\
	> 0,& \text{when } x_B > \frac{b}{a + b}.\\
	\end{cases}
\label{rate_difference}
\end{align}
%\soham{should i make this a theorem?}
%\dana{If you want to, it's OK with me.}
Thus, $\dot{x}_{B2} < \dot{x}_{B1}$ in the initial stages when $x_B < b/(a + b)$. However, once the proportion of $B$ agents become big enough such that $x_B > b/(a + b)$, then the higher the value of $c$, the higher the rate of change will be. Thus, when $c$ is high, the switch from $A$ to $B$ takes time to speed up, with more cultural inertia than when $c$ is low, even when the payoff of the new norm is arbitrarily large compared to the previous norm. The initial cultural inertia results in the society with a higher $c$ value to take longer overall to switch to the new norm.

Figure \ref{well_mixed_inertia_sims} illustrates these properties of well-mixed populations using the replicator dynamic. We start off the society at the proportion $x_A = 0.95$. In the first of the three graphs, the tighter society (again using ``tighter'' as shorthand for ``higher need for coordination'') 
has $c > (b - a)/b$. Thus, while the less-tight society switches to the more beneficial norm $B$, the tighter society is resistant to the change (since the difference in payoffs is small) and stays with norm $A$. The second and third graphs show situations where both societies switch to norm $B$. We observe that the tighter society switches more slowly towards changing to norm $B$, but the difference in speed decreases as the difference in payoffs between $B$ and $A$ increases. 

As derived in Eq.\ \eqref{rate_difference}, 
the rate of change for a society with higher $c$ grows larger than with lower $c$ only after $x_B > \frac{b}{a + b}$. This is shown in Figure \ref{well_mixed_rate_sims}. This also indicates the initial inertia that societies with a higher need for coordination experience towards changing norms. The need for coordination in these societies lead to individuals being reluctant to try out new norms, which in turn leads to inertia. 

%Translating such results into structured populations however is a separate research topic. Social network structures are very different in tight and loose cultures, and while there has been some very recent work on this issue \cite{roos2014high}, it is still an open problem. So we don't attempt to replicate the rate of change results on structured populations. \textbf{rewrite this}

\begin{figure}[t]
\centering
\includegraphics[width=0.6\textwidth]{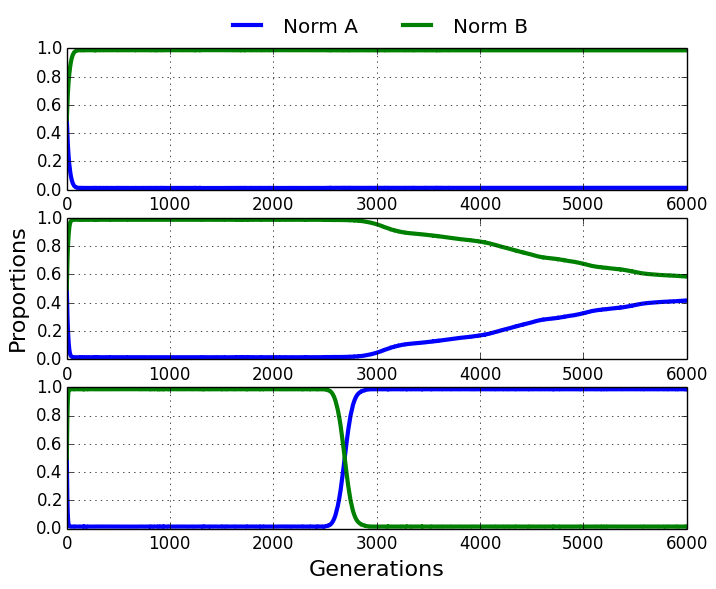}
\caption{Simulations with the Fermi rule on a toroidal grid of size 2500. From top to bottom: $c = 1.0$, $c = 0.75$, $c = 0.5$. Initially: $a = 1.0$, $b = 1.15$. We use a structural shock at 2500 iterations, after which the payoffs become: $a =1.15$, $b = 1.0$.}
\label{game_model_inertia}
\end{figure}

\subsection{Agent simulations on finite networks}
\label{finite_network_inertia}

%\dana{I think well-mixed means much more than ``uniformly distributed geographically''. For example, you could have a grid on which different types of agents are uniformly distributed, but they wouldn't be well-mixed.}
A limitation of the above model is that it assumes that the population is \emph{infinite} and \emph{well-mixed}. While the assumption that a population is infinite is not a bad approximation for very large populations (which is the scale that we are interested in), the assumption that agents are well-mixed, i.e., where any agent can interact with any other agent, is often inaccurate. In this section, we show that the results derived in the previous section, also extend to cases where agents are structured on the nodes of a graph/network, where agents can only interact with another agent if they are connected by an edge in the graph.

More specifically, we now consider a structured population where agents are arranged on the nodes of a toroidal (wrap-around) grid, such that each agent can interact only with the 4 other agents they are connected to. %\soham{could switch the results on the grid with a small world network}
%\dana{It's probably OK to keep what you have, but whichever one you have, you should justify it, e.g., why should the reader consider a toroidal grid to be a reasonable model? Part of the justification could be to point out that you got similar results with several kinds of networks, and you're just using toroidal grids as a convenient example.}
We consider toroidal grids as a convenient example, however, the results we describe below also extend to other network structures like small-world networks \cite{watts1998collective}, and preferential attachment \cite{barabasi1999emergence} models.
Mathematical analysis of evolutionary games on structured populations is not yet a well-developed field, and thus we perform simulations of our model as follows.

Initially, we arrange agents with random strategies ($A$ or $B$) on each node of the grid. In each iteration, each pair of agents connected by an edge interact in a two-player game defined by the payoff matrix $M$. The total payoff of each agent is computed by summing over the payoffs received by an agent for each game that they play. Since the population is finite, we use dynamics defined on finite populations. After each interaction phase, agents use the Fermi rule to update it's strategy for the next iteration. Under the Fermi rule, an agent $\psi_a$ picks a random neighbor $\psi_n$ and observes its payoff, and the agent then decides to switch to the neighbor's strategy with probability
$p = (1+\exp(-s(u_a - u_n)))^{-1}$,
where $u_a$ and $u_n$ are the payoffs of the agent and the neighbor, and $s$ is a user-defined parameter (in all our experiments, we set $s= 5$). With probability $1-p$, the agent retains its old strategy. With a small probability $\mu$, called the exploration rate, an agent then tries out an action completely randomly. This repeats for every iteration of the simulation. Note that the Fermi rule also only depends on a difference between payoff values, and thus, like the replicator dynamic, is also invariant to addition of a constant to the payoffs. Thus, for all our experiments in this section, we use the simplified game matrix $M'$ from Figure \ref{fig:new_payoffs}.

%\soham{need to motivate the structural shock better}
%\dana{agreed.}
To study cultural inertia (i.e., resistance to changing a cultural norm) or rapid cultural change in different societies, we use a game-theoretic model of a \emph{structural shock}. A structural shock represents a catastrophic incident in a society, where suddenly there is abrupt change in the payoffs for actions $A$ and $B$. We are interested in studying how societies with different needs for coordination react to such an abrupt and drastic shift in the payoffs of the possible actions.
In our EGT model, we implement a structural shock by simply interchanging the payoffs of actions $A$ and $B$, thus, denoting a sudden change in the globally optimal action in a society. This is equivalent to interchanging the payoff values $a$ and $b$. Thus, if initially, we have $b > a$, after a structural shock, we get $a > b$.

Now consider that, initially, the action with a higher utility (and the current norm) in a society is $B$, i.e., $b > a$ with $x_A = 0$. Suppose, the society experiences a structural shock, where now action $A$ becomes more desirable with $a > b$. On introducing a small proportion of agents playing norm $A$ (say $x_A = 0.01$), 
 if the need for coordination is low then the population will switch to the new norm with $x_A = 1$. This is because, after the structural shock, the Nash Equilibrium (and ESS) is $x_A = 1$, as shown above. However, if the need for coordination is high (i.e., $c \ge \frac{a - b}{a}$), then $x_A = 0$ is still a Nash Equilibrium (and ESS) and the population will remain on the sub-optimal norm $B$ even after the structural shock. 

All experiments were run on a grid with 2500 nodes, and the simulation goes on for 6000 iterations, with a structural shock implemented at 2500 iterations. 100 independent simulations are run for each setting and the results are averaged over the 100 runs. Figure \ref{game_model_inertia} shows the results of our simulations. The plots show the proportion of agents playing norm $A$ vs norm $B$. As before, the parameter $c$ denotes the need for coordination.
When $c$ is low, very little cultural inertia develops and agents are more willing to innovate by exploring behaviors other than the current societal norms. In this case, the population will change more quickly to a different norm if the new norm will be beneficial. By contrast, when $c$ is high, we see the evolutionary emergence of higher levels of cultural inertia, with agents less willing to innovate or to violate established cultural norms. In this case, the population is slower to change to the new norm, and if $c$ is high enough it may not change at all. Thus, qualitatively, the results with a structured populations match those from the infinite well-mixed populations in Section \ref{sec:well_mixed}, and the mechanics that lead to the above results can be explained using the same equilibrium results derived above.

\section{Evolving Exploration Rates}

In addition to cultural inertia, another key aspect to understanding how norms change is to study an agent's tendency to learn socially (i.e., adopt behaviors used by others in the population) or innovate and explore new behaviors. Such tendencies are critical in understanding the rate at which new technologies, languages, or moral traditions are adopted in a population, and provide insight about the processes of influence and persuasion at the individual level.

In the model presented in Section \ref{finite_network_inertia} for finite structured populations, the exploration rate (i.e., the small probability with which an agent tries out a new strategy at random) was kept at a constant low value. This exploration rate denotes how much an agent is open to change and trying out new actions at random. Thus, it seems that the need for coordination in a society might affect how likely an individual is to try out different actions, instead of conforming to their neighbors. Particularly, it seems natural to assume that individuals in tight societies are much less likely to try out random actions than individuals in loose societies \cite{gelfand2011differences, harrington2014tightness}. In this section, we test this hypothesis by presenting a model to study the evolution of exploration rates in different societies.
%\soham{should we mention threat here again?}
%\dana{I don't think it's particularly relevant here.}

\begin{figure}[t]
        \centering
        \begin{subfigure}[h]{0.4\textwidth}
                \includegraphics[width=\textwidth]{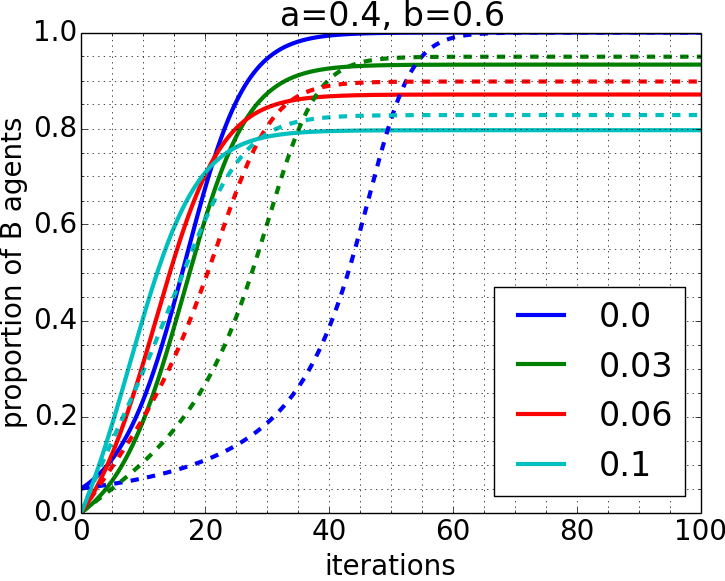}
                %\caption{$w = 0.0$}
             %  \label{fig:sl_c1_norm}
        \end{subfigure}
        \begin{subfigure}[h]{0.4\textwidth}
                \includegraphics[width=\textwidth]{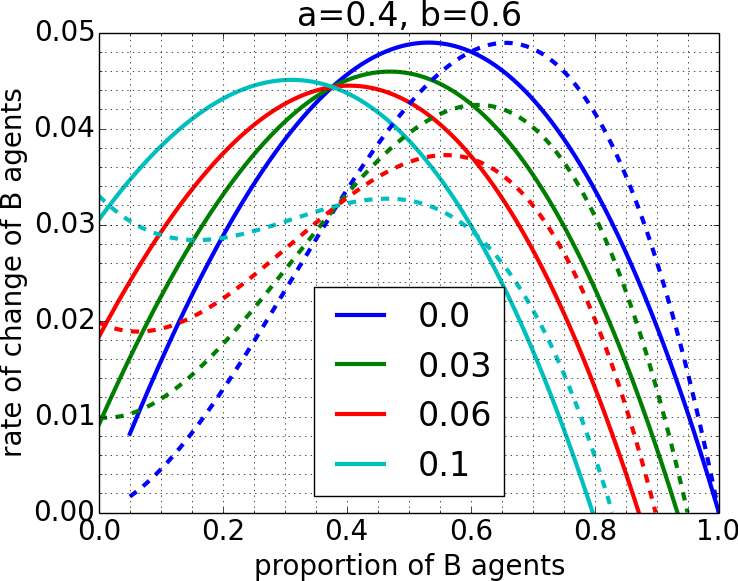}
                %\caption{$w = 0.2$}
               % \label{fig:sl_c08_norm}
        \end{subfigure}
         \caption{Replicator-mutator dynamic on an infinite well-mixed population with $a = 0.4$ and $b = 0.6$. The solid and dotted lines denote $c = 0.05$ and $c = 0.3$, respectively. The colors denote the exploration rates.}
         \label{exploration_well_mixed}
\end{figure}

%\soham{why use it?}
To get some intuition about the hypothesis,
%\dana{i.e., the hypothesis, or something else?}
we go back to our setting of a well-mixed infinite population. Note that the replicator dynamic does not have a provision for exploration rates. Thus, we use a variant of the replicator dynamic called the replicator-mutator equation \cite{levin2003complex}. Using this variant, one can include exploration rates into the replicator dynamic. Thus, if we fix $\mu$ to be the exploration rate, we can write the replicator-mutator equation as:
\begin{align*}
\dot{x}_A &= (1- \mu) x_A \mathbb{E}[u_A(x)] + \mu x_A \mathbb{E}[u_B(x)] - x_A \theta(x) \\
&= x_A( \mathbb{E}[u_A(x)] - \theta(x)) + \mu x_A (\mathbb{E}[u_B(x)] - \mathbb{E}[u_A(x)]).
%\dot{x}_B = \mu x_B \mathbb{E}[u_A(x)] + (1-\mu) x_B \mathbb{E}[u_B(x)] - x_B \theta(x).
\end{align*}
Thus, like the replicator dynamic, we can write the rate of change in terms of payoff differences, which makes the dynamic invariant to additions to the payoffs by a constant. Thus, we again use the simplified game matrix $M'$ from Figure \ref{fig:new_payoffs}.
Simplifying the equation for $\dot{x}_A$, we get:
\begin{align}
\dot{x}_A  =& \, \, x_A (1-x_A) (c(a + b)x_A - (b - (1-c)a)) \nonumber \\
&+ \mu (x_A x_B (1-c) (b - a) + (x_B^2 b - x_A^2 a)). \label{mut_repl}
\end{align}

Figure \ref{exploration_well_mixed} plots the replicator-mutator equation (Eq.\ \eqref{mut_repl}) with a well-mixed infinite population. The solid lines are for a low need for coordination ($c = 0.05$), while the dotted lines are for a high need for coordination ($c = 0.3$), and we plot the proportion of $B$ agents, as well as the rate of change, for various exploration rate $\mu$ values. From the figure, we see that for all exploration rates $\mu$, when the need for coordination is high then there is higher cultural inertia.

\begin{figure*}[t]
        \centering
        \begin{subfigure}[h]{0.32\textwidth}
                \includegraphics[width=\textwidth]{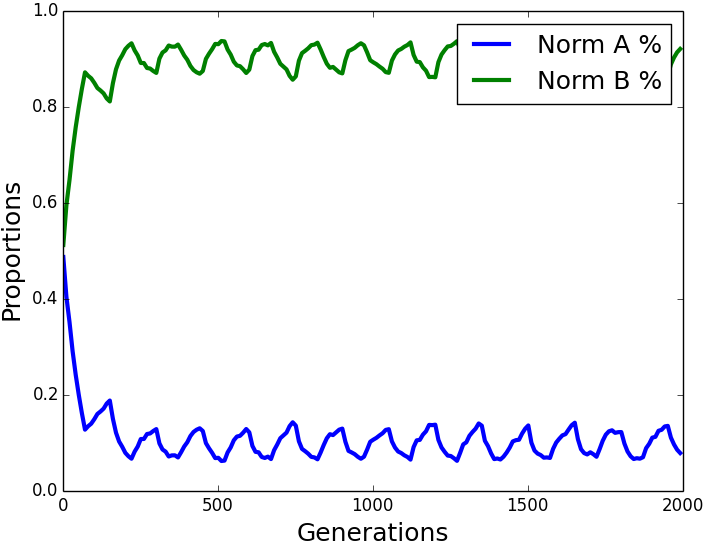}
                %\caption{$w = 0.0$}
            %    \label{fig:sl_c1_norm}
        \end{subfigure}
        \begin{subfigure}[h]{0.32\textwidth}
                \includegraphics[width=\textwidth]{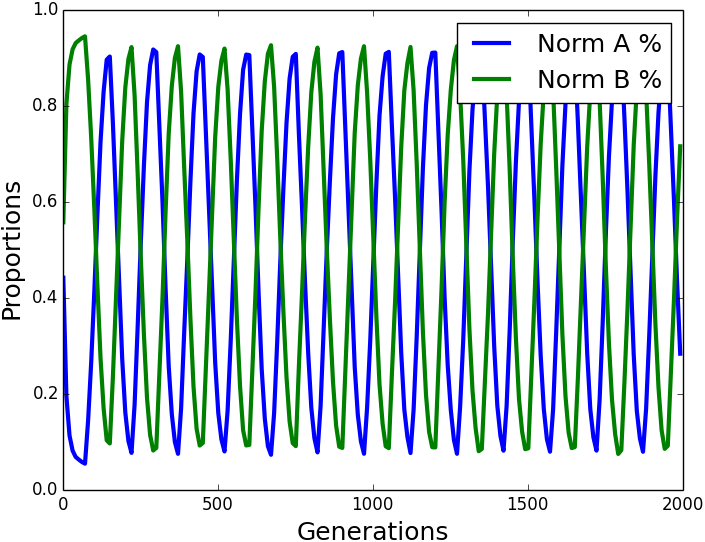}
                %\caption{$w = 0.2$}
        %        \label{fig:sl_c08_norm}
        \end{subfigure}
        \begin{subfigure}[h]{0.32\textwidth}
                \includegraphics[width=\textwidth]{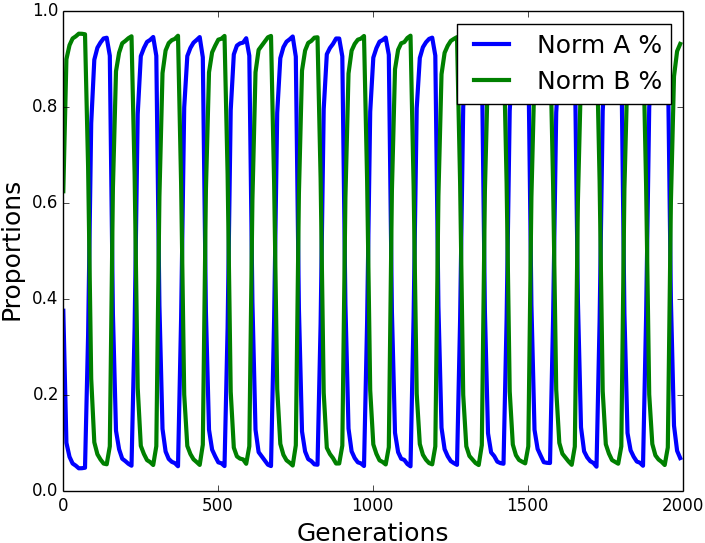}
                %\caption{$w = 0.5$}
        %        \label{fig:sl_c05_norm}
        \end{subfigure}
        \begin{subfigure}[h]{0.32\textwidth}
                \includegraphics[width=\textwidth]{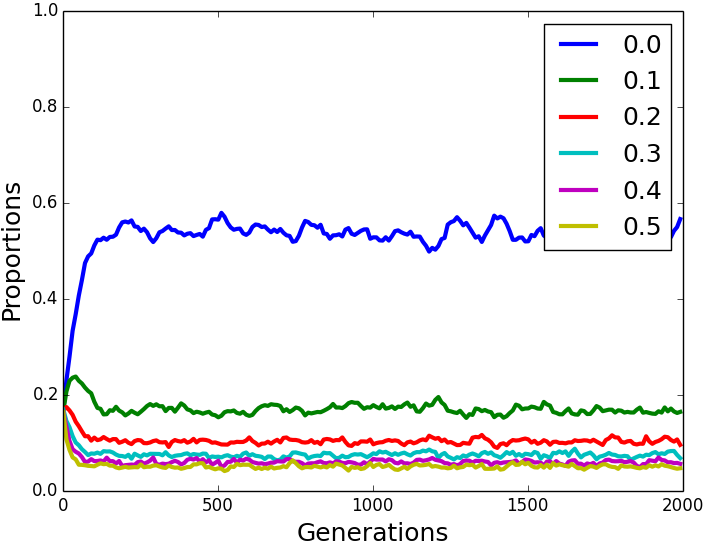}
                \caption{$c = 1.0$}
                \label{fig:sl_c1_mu}
        \end{subfigure}
        \begin{subfigure}[h]{0.32\textwidth}
                \includegraphics[width=\textwidth]{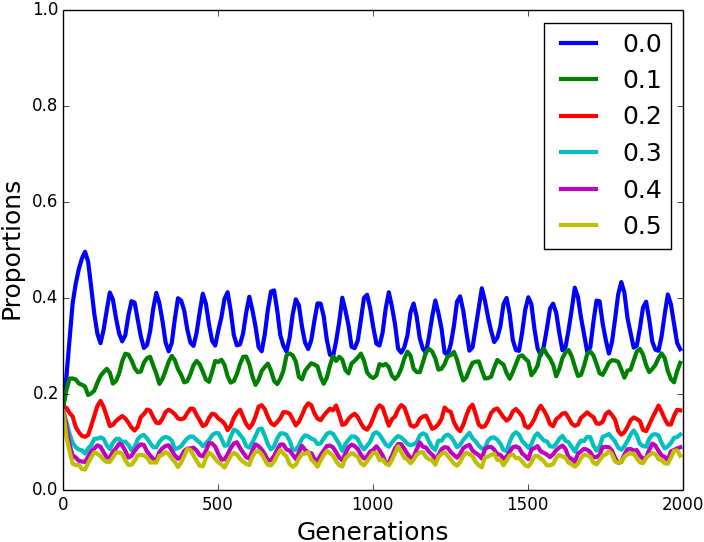}
                \caption{$c = 0.8$}
                \label{fig:sl_c08_mu}
        \end{subfigure}
        \begin{subfigure}[h]{0.32\textwidth}
                \includegraphics[width=\textwidth]{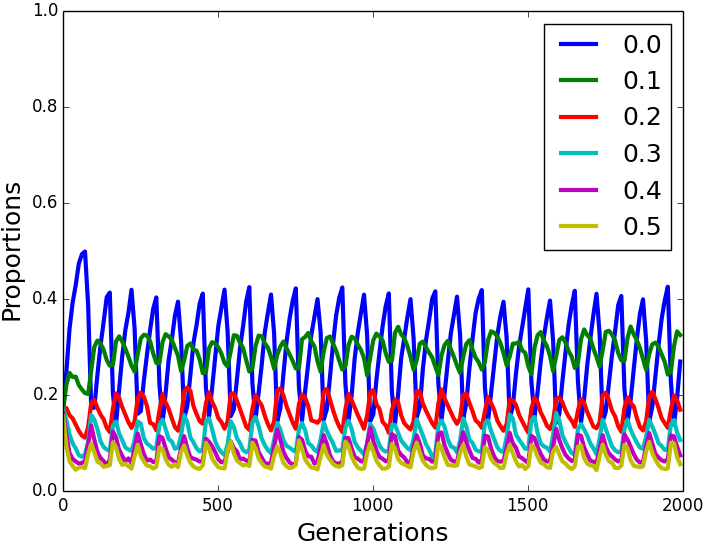}
                \caption{$c = 0.5$}
                \label{fig:sl_c05_mu}
        \end{subfigure}
         \caption{Simulations with the Fermi rule on a toroidal grid of size 2500, with structural shocks at intervals of 75 iterations. From left to right: $c = 1.0$, $c = 0.8$, $c = 0.5$. Initially: $a = 1.0$, $b = 1.15$. The top row shows proportions of norms $A$ and $B$. The bottom row shows proportions of the population using different exploration rates.}
         \label{exploration_sims}
\end{figure*}

To study the evolution of explorative behaviors, we let the exploration rate (referred to as the mutation rate
in biological models) \emph{evolve}. The exploration rate is the probability $\mu$ with which an agent chooses a random new strategy at each iteration ($0 \le \mu \ll 1$). In biological evolution, mutation occurs so rarely that game-theoretic biological models often omit it. In cultural evolution, however, exploration is an important step since individuals try out new behaviors much more frequently \cite{traulsen2009exploration}. Studying the evolution of exploration rates helps us get insights about a society's openness to change. Low exploration rates suggest that individuals are less likely to try out new strategies and are more likely to coordinate with their neighbors. On the other hand, high exploration rates mean that individuals are more open to change and innovation.

To model the evolution of exploration rates, we first create a set $L$ of possible exploration rates. These can be a finite discrete set of exploration rates. For all our experiments, we use the set of exploration rates: $L = \{0.0, 0.1, 0.2, 0.3, 0.4, 0.5\}$. The exploration rate is added as part of the strategy of an agent, and each individual now chooses an exploration rate in addition to the game action ($A$ or $B$). Thus, an agent now copies the exploration rate of a neighbor, along with the game action, when updating its strategy using the Fermi rule. 

Note that, a regularly changing environment is essential for studying the evolution of exploration rates since, if the environment is not changing frequently enough, an exploration rate of 0 would always be evolutionarily stable. To model the changing environment, we will use the same switch in dominant norms (structural shock) that we used in our earlier experiments, except now we apply the structural shock multiple times at much shorter and regular intervals. We use a fixed interval of 75 iterations to apply the structural shock. We run the simulation for a total of 2000 iterations. For these experiments, an agent's strategy set now becomes a size of 10: 5 possible exploration rates in $L$ $\times$ 2 possible game actions (norm $A$ or norm $B$). We use the same toroidal grid as described before. Figure \ref{exploration_sims} shows the experimental results. Each column in Figure \ref{exploration_sims} shows, for a specific $c$, the proportion of agents playing norm $A$ vs norm $B$ (top plot), and the proportion of agents with each exploration rate (bottom plot).

We see that when the need for coordination is high, low exploration rates are adopted by the majority of the society. Individuals in such a society are more likely to adopt the strategies of their neighbors, and this leads to high cultural inertia. In loose societies, however, higher proportions of exploration rates $\mu >0$ evolve, and individuals are more open towards change, leading to lower cultural inertia. This fits well with our results in Section \ref{proposed_model}, and provides insights into why cultural inertia develops in societies with a higher need for coordination.

\section{Discussion}

%We hope our work will inspire others in the AAMAS community to incorporate recent insights from the social and behavioral sciences into their work.

%To date, there has been little research on the evolution- ary processes of norm maintenance and the processes that lead to norm change, and how these processes are substan- tially different in societies around the world. However, re- cent world events (e.g., recent social uprisings and turmoil) show that it is critically important to develop such an un- derstanding. In this paper, we draw ideas from recent social science research to build culture-sensitive models that pro- vide insights into the substantial societal differences that exist in how individuals interact and influence each other.
%In this paper, we use EGT to examine the relationships of the amount of need for coor- dination (which psychological and sociological studies show is related to norm strength [26]), with two key aspects of norm change in societies: (i) the amount of cultural inertia, i.e., the amount of resistance to changing a cultural norm, and (ii) the exploration rate, i.e., the extent to which agents are willing to try out new behaviors.

In this paper, we examined the processes underlying cultural inertia and norm change. We build evolutionary game-theoretic models that show that societies that have a higher need for coordination -- those that are tight -- have higher cultural inertia, with individuals being less likely to switch to the new norm even when it might have a larger payoff. 
Societies with a lower need for coordination -- those that are loose -- on the other hand, have low cultural inertia, with individuals more willing to innovate and open to change. 

By letting the exploration rate evolve, we use it to study an agent's tendency to either learn using social interaction or innovate and explore new random behaviors, and we show that exploration rates evolve differently in different cultures. 
When the need for coordination is high, the majority of the population has very low exploration rates, and individuals are more likely to adopt the strategies of their neighbors. When the need for coordination is low, higher exploration rates evolve, leading to lower cultural inertia, and more openness to change. This explains why tight cultures tend to have less deviant behavior among individuals with more norm adherence.

%\dana{In understanding the evolution of norms (or whatever we said was our motivation), what significance do the specific results in this paper have? Is there anything we can say about that?}
%\soham{can we add to the significance of our results by saying that we hope that this paper starts future research focusing on how information diffusion mechanisms are different in different societies?}
%\dana{OK with me.}
To our knowledge, this is the first paper that predicts the effects of the need for coordination on norm change and cultural inertia, and how it affects an agent's decision of whether to learn from others or to innovate and explore random behaviors. We found our main qualitative findings to be robust to a wide range of parameter values, in both our simulation and theoretical results.

In the future, it would be interesting to study how network structures differ between tight and loose cultures. This would provide the insights necessary to extend our work and study the dynamics of the \emph{rate} at which norm change occurs in different cultures. Further, contrary to prior work on information diffusion (see \cite{shakarian2012review} for a review), our work indicates that the structure of interaction and incentives are vastly different across cultures. Thus, it would be interesting to study culture-sensitive models for information diffusion and propagation, that would be more accurate in predicting cascades and epidemics in different societies around the world.

By studying how socio-structural factors such as the need for coordination affect cultural inertia, this work aims to establish a culturally-sensitive model of norm change. With this model, we identify the conditions that lead to stability or instability in established population norms in different cultural contexts. 
Such knowledge is critical in providing us the ability to identify early markers of impending drastic shifts in populations' norms and thus enable tools providing alerts to potential social uprisings and turmoil.

\section*{Acknowledgments}
This research was supported in part by the U.S. Air Force Grant FA9550-14-1-0020, and was facilitated by the University of Maryland's supercomputing resources (http://www.it.umd.edu/hpcc). The information in this paper does not necessarily reflect the position or policy of the funders, and no official endorsement should be inferred.

\appendix
\setcounter{secnumdepth}{0}
\section{Appendix}
%\section{Appendix: Proofs of Technical Results}

Here we present proofs of the technical results presented in the paper.

\subsection{Proof of Lemma \ref{two_player_nash}}

\begin{proof}
For $(A, A)$ to be a Nash equilibrium of the game $M'$, as defined in Figure \ref{fig:new_payoffs}, the following condition has to hold:
\begin{align}
M'_{AA} \ge M'_{BA} \quad   \Rightarrow   \quad  c \ge \frac{b - a}{b}. \label{twop_a_nash}
\end{align}
Similarly, for $(B, B)$ to be a Nash equilibrium, the required condition is:
\begin{align}
M'_{BB} \ge M'_{AB} \quad   \Rightarrow   \quad  c \ge \frac{a - b}{a}. \label{twop_b_nash}
\end{align}
Consider the following two cases:
\begin{enumerate}
\item $b > a : $ In this case, \eqref{twop_b_nash} is always satisfied. Thus, $(B, B)$ is a NE. If $c$ is large enough such that \eqref{twop_a_nash} is satisfied, then $(A, A)$ is also a NE.

\item $a > b : $ In this case, \eqref{twop_a_nash} is always satisfied. Thus, $(A, A)$ is a NE. If $c$ is large enough such that \eqref{twop_b_nash} is satisfied, then $(B, B)$ is also a NE.
\end{enumerate}

Note that $((q, 1-q), (q, 1-q))$ is a mixed-strategy Nash Equilibrium when the strategy $(q, 1-q)$ makes the agent indifferent to the opponent's strategy, i.e., when:
\begin{align}
qM'_{AA} + (1-q)M'_{BA} = qM'_{AB} + (1-q)M'_{BB}.\label{twop_mixed}
\end{align}
Simplifying this, we get:
$$q = \frac{b - (1-c)a}{c(a + b)}. $$
We know that $0 \le q \le 1$. Thus, this reduces to the following two conditions for $((q, 1-q), (q, 1-q))$ to be a Nash Equilibrium:
$$c \ge \frac{b - a}{b} \quad \text{and} \quad c \ge \frac{a - b}{a}.$$
When $b > a$, $c \ge \frac{a - b}{a}$ is always satisfied. Thus, when $c$ is large enough such that $c \ge \frac{b - a}{b}$, $((q, 1-q), (q, 1-q))$ is a mixed-strategy Nash Equilibrium. Similarly, when $a > b$, $c \ge \frac{b - a}{b}$ is always satisfied, and when $c$ is large enough such that $c \ge \frac{a - b}{a}$, $((q, 1-q), (q, 1-q))$ is a mixed-strategy Nash Equilibrium.
\end{proof}

\subsection{Proof of Lemma \ref{well_mixed_nash}}

\begin{proof}
Consider the cases: $x_A = 0$ (the strategy set where all of the population plays $B$) and $x_A = 1$ (the strategy set where all of the population plays $A$). From Lemma \ref{two_player_nash}, we get the following two cases:

\begin{enumerate}
\item $b > a : $ In this case, \eqref{twop_b_nash} is always satisfied. Thus, $x_A = 0$ is a NE. If $c$ is large enough such that \eqref{twop_a_nash} is satisfied, then $x_A = 1$ is also a NE.

\item $a > b: $ In this case, \eqref{twop_a_nash} is always satisfied. Thus, $x_A = 0$ is a NE. If $c$ is large enough such that \eqref{twop_b_nash} is satisfied, then $x_A = 1$ is also a NE.
\end{enumerate}

Now consider the intermediate case where $x_A = p$ with $0<p<1$. For $x_A = p$ to be a NE, no $A$ agent should have a strictly better payoff if switching to $B$, and vice versa. Thus, the following two conditions need to be simultaneously satisfied:
\begin{align*}
&p M'_{AA} + (1-p) M'_{AB} \ge p M'_{BA} + (1-p) M'_{BB}, \\
\text{and } \quad &p M'_{BA} + (1-p) M'_{BB} \ge p M'_{AA} + (1-p) M'_{AB}.
\end{align*}
Both of these conditions are satisfied only when:
$$p M'_{AA} + (1-p) M'_{AB} = p M'_{BA} + (1-p) M'_{BB}.$$
This simplifies to:
$$p = \frac{b - (1-c) a}{c (a + b)},$$
and similar to Lemma \ref{two_player_nash}, the results follow.
\end{proof}

\subsection{Proof of Theorem \ref{ess_thm}}

\begin{proof}
Let $C$ denote the mixed strategy $(q, 1-q)$. From \eqref{twop_mixed}, we see that for $C$ to be a mixed strategy NE, the following condition needs to hold:
\begin{align}
q = \frac{M'_{BB} - M'_{AB}}{M'_{AA} - M'_{BA} + M'_{BB} - M'_{AB}}, \label{ms_q_cond_ess}
\end{align}
Let $M'_{AC}$ denote the payoff received by the row player when an $A$ player (row player) interacts with a $C$ player (column player). Similarly, we define $M'_{CC}$, $M'_{CA}$, $M'_{BC}$ and $M'_{CB}$. Thus we get:
\begin{align*}
M'_{CC} &:= q^2 M'_{AA} + q(1-q) M'_{AB} + q(1-q)M'_{BA} + (1-q)^2 M'_{BB}, \\% \label{mcc} \\
M'_{CA} &:= q M'_{AA} + (1-q) M'_{BA}, \\% \label{mca} \\
M'_{AC} &:= q M'_{AA} + (1-q) M'_{AB}, \\%\label{mac} \\
M'_{CB} &:= q M'_{AB} + (1-q) M'_{BB}, \\%\label{mcb} \\
M'_{BC} &:= q M'_{BA} + (1-q) M'_{BB}. %\label{mbc}
\end{align*}
Let us derive conditions for which $A$ is an Evolutionarily Stable Strategy (ESS). Thus, we consider the proportion of agents playing $A$ to be close to $1$, i.e., $x_A = 1-\epsilon$, where $0 < \epsilon \ll 1$.

Let $S$ denote the set of strategies other than $A$ that agents can play, i.e., $S \in \{B, C\}$. For $A$ to be an ESS, one of the following conditions need to hold: either
\begin{enumerate*}[label=(\roman*)]
\item $M'_{AA} > M'_{SA}$, or
\item $M'_{AA} = M'_{SA}$ and $M'_{AS} > M'_{SS}$.
\end{enumerate*}

From Lemma \ref{two_player_nash}, we see that $M'_{AA} > M'_{BA}$ simplifies to the condition $c > \frac{b - a}{b}$. Further, $M'_{AA} = M'_{BA}$ simplifies to $c = \frac{b - a}{b}$. 

We also notice that $M'_{AA} > M'_{CA}$ simplifies to:
\begin{align}
M'_{AA} > q M'_{AA} + (1-q) M'_{BA}  \quad \Rightarrow \quad M'_{AA} > M'_{BA}. \nonumber
\end{align}
Now consider the three cases:
\begin{enumerate}
\item $b > a : $ In this case, if $c$ is large enough such that $c > \frac{b - a}{b}$ is satisfied, then $A$ is an ESS.
\item $a > b : $ In this case, $c > \frac{b - a}{b}$ is always satisfied. Thus, $A$ is an ESS.
\item $b > a $ and $c = \frac{b - a}{b} : $ In this case, for $A$ to be an ESS, both the conditions $M'_{AB} > M'_{BB}$ and $M'_{AC} > M'_{CC}$ have to be satisfied.  $M'_{AB} > M'_{BB}$ simplifies to $(1-c) a > b$, which is never satisfied.
$M'_{AC} > M'_{CC}$ simplifies to:
\begin{align}
&q M'_{AA} + (1-q) M'_{AB} > q^2 M'_{AA} + q(1-q) M'_{AB} + q(1-q)M'_{BA} + (1-q)^2 M'_{BB}, \nonumber \\
\Rightarrow \quad & q (1-q) (M'_{AA} - M'_{BA})  > (1 - q)^2 (M'_{BB} - M'_{AB}), \label{mcc_mac}
\end{align}
which is also never satisfied (follows from \eqref{ms_q_cond_ess}). Thus, for this case, $A$ is not an ESS.
\end{enumerate}

We can similarly derive conditions for which $B$ is an Evolutionarily Stable Strategy (ESS).

Now we examine whether $C$ is an ESS. Using \eqref{ms_q_cond_ess}, we can show that the following conditions are satisfied: $M'_{CC} = M'_{AC}$ and $M'_{CC} = M'_{BC}$.  Thus, for $C$ to be an ESS, both $M'_{CA} > M'_{AA}$ and $M'_{CB} > M'_{BB}$ need to be satisfied. These two conditions simplify to the following conditions: $M'_{BA} > M'_{AA}$ and $M'_{AB} > M'_{BB}$, which in turn simplify to the conditions $c < \frac{b - a}{b}$ and $c < \frac{a - b}{a}$. Both of these conditions cannot be simultaneously satisfied. Thus, $C$ is not an ESS.
\end{proof}

{\small
\bibliographystyle{plain}
\bibliography{references}}

\end{document}